\tikzstyle{int}=[draw, fill=blue!20, minimum size=2em]
\tikzstyle{init} = [pin edge={to-,thin,black}]
\newcommand{\bb}[1]{\textcolor{black}{#1}}
\DeclareMathOperator*{\argmin}{arg\,min}
\DeclareMathOperator*{\argmax}{arg\,max}
\newcommand{\ra}[1]{\renewcommand{\arraystretch}{#1}}
\newcommand{\RNum}[1]{\uppercase\expandafter{\romannumeral #1\relax}}
\newtheorem{theorem}{Theorem}
\newtheorem{lemma}{Lemma}
\theoremstyle{stylename}
\newtheorem*{definition}{Definition}
\tikzstyle{vertex}=[circle, draw, inner sep=0pt, minimum size=5pt]
\tikzset{symbol/.style={rectangle, draw, very thick,
minimum size=10mm, rounded corners=1mm}}
\tikzset{symbol2/.style={rectangle , draw,  thick,
minimum size=35mm, rounded corners=1mm}}
\newcommand{\vertex}{\node[vertex]}
\newcolumntype{C}[1]{>{\centering\arraybackslash}p{#1}}
\tikzset{mycolor/.style = {line width=1bp,color=#1}}%
\tikzset{myfillcolor/.style = {draw,fill=#1}}%
\NewDocumentCommand{\highlight}{O{blue!40} m m}{%
\draw[mycolor=#1] (#2.north west)rectangle (#3.south east);
}
\NewDocumentCommand{\fhighlight}{O{blue!40} m m}{%
\draw[myfillcolor=#1] (#2.north west)rectangle (#3.south east);
}
\begin{document}
\title{\centering \bb{Fountain Codes with Nonuniform Selection Distributions through Feedback}}

\renewcommand{\footnoterule}{%
  \kern -3pt
  \hrule width \textwidth height 0pt
  \kern 3pt
}


\author{\IEEEauthorblockN{Morteza Hashemi\IEEEauthorrefmark{1}, Yuval Cassuto\IEEEauthorrefmark{2}, and Ari Trachtenberg\IEEEauthorrefmark{1} \\}
\IEEEauthorblockA{\IEEEauthorrefmark{1}Dept. of Electrical and Computer Engineering, Boston University \\}
\IEEEauthorblockA{\IEEEauthorrefmark{2}Dept. of Electrical Engineering, Technion, Israel Institute of Technology \vspace{-1cm}}
\thanks{This paper was presented in part at the 51st and 52nd Annual Allerton Conference on Communication, Control, and Computing \cite{allerton2013,allerton2014}.}}



\maketitle

\begin{abstract}
One key requirement for fountain (rateless) coding schemes is to achieve a high \emph{intermediate} symbol recovery rate. Recent coding schemes have incorporated the use of a feedback channel to improve intermediate performance of traditional rateless codes; however, these codes with feedback are designed based on \emph{uniformly at random} selection of input symbols. In this paper, on the other hand, we develop feedback-based fountain codes with dynamically-adjusted nonuniform symbol selection distributions, and show that this characteristic can enhance the intermediate decoding rate.   We provide an analysis of our codes, including bounds on computational complexity and failure probability for a maximum likelihood decoder; the latter are tighter than bounds known for classical rateless codes. Through numerical simulations, we also show that feedback information paired with a nonuniform selection distribution can highly improve the symbol recovery rate, and that the amount of feedback sent can be tuned to the specific transmission properties of a given feedback channel.  

\end{abstract}

\begin{IEEEkeywords}
Fountain codes, Feedback channel, LT codes, Nonuniform symbol selection 
\end{IEEEkeywords}

\section{Introduction}
Reliable communication over erasure channels has emerged as a key technology for various networked applications, 
for example digital video broadcasting and over-the-air software updates.  In applications where there exists a high-throughput feedback channel, automatic repeat request (ARQ) protocols guarantee reliability over erasure channels.  However, when such feedback channels are not available, rateless codes, such as the capacity achieving Luby-Transform (LT)~\cite{luby2002lt} and Raptor codes~\cite{shokrollahi2006raptor}, can often provide reliable communication for sufficiently long block lengths. These codes have a well-known all-or-nothing decoding property (the so-called ``waterfall'' phenomenon), where a jump in the fraction of decoded input symbols occurs near the very end of the decoding process. For applications with real-time requirements, however, it is desirable to be able to recover symbols as decoding proceeds, i.e., to achieve a high intermediate symbol recovery rate.     

In fact, the intermediate performance of classical codes can be improved by incorporating the use of a feedback channel.  For instance, a decoder in Real-Time (RT) oblivious~\cite{beimel2007rt} and Shifted-LT (SLT)~\cite{hagedorn2009rateless} codes sends the number of recovered symbols back to the transmitter, and this feedback is used to modify the degree distribution at the encoder.  
Previous feedback-based rateless codes are mostly based on adjusting the degree of encoding symbols, e.g., by shifting the degree distribution in the SLT codes. However, after a degree $d$ is picked for an encoding symbol, $d$ input symbols are chosen \emph{uniformly at random} and xored to form the symbol. Moreover, the encoder does not have full freedom in controlling the number of feedbacks transmitted.  
 
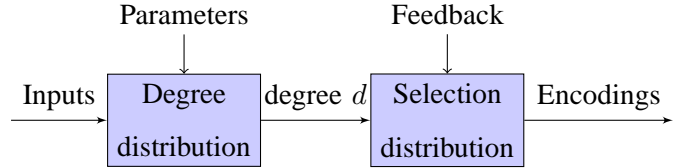
\begin{figure}
\centering
\begin{tikzpicture}[node distance=3.5cm,auto,>=latex']
    \node [int, pin={[init]above:Parameters}, align=center] (a) {Degree \\ distribution};
    \node (b) [left of=a,node distance=2.3cm, coordinate] {a};
    \node [int, pin={[init]above:Feedback}, align=center] (c) [right of=a] {Selection \\ distribution};
    \node [coordinate] (end) [right of=c, node distance=3cm]{};
    \path[->] (b) edge node {Inputs} (a);
    \path[->] (a) edge node {degree $d$} (c);
    \draw[->] (c) edge node {Encodings} (end) ;
\end{tikzpicture}
\caption{Two-step rateless encoder with a degree distribution and nonuniform symbol selection distribution.}
\label{two-step}
\end{figure}

In this paper, we develop a class of rateless coding schemes that optimize for high intermediate symbol recovery rate.  At its core, our encoder uses a nonuniform selection distribution that is dynamically adjusted based on feedback information. Fig. \ref{two-step} depicts a schematic of our two-step encoder, where we illustrate
that the inputs are chosen according to a feedback-based selection distribution, rather than uniformly at random. Feedback messages contain information on the distance between a received encoding symbol and the set of already decoded symbols at the receiver.  In the \emph{general form} of our codes, the encoder estimates the probability that each input symbol has been decoded (at the decoder), and these estimates are then used to dynamically tune the selection of input symbols within subsequent transmissions. This method enables the encoder to naturally track the decoding progress and generate encoding symbols that result in a faster decoding rate compared with a uniform selection of input symbols. This class of codes is suitable for the scenarios with relatively large feedback budgets, although we allow the decoder to specific control \emph{when} feedback occurs (according to the budget).  

On the other hand, the \emph{primitive form} of our code is designed based on a parsimonious use of the feedback channel. In this case, the encoder learns which symbols have been decoded, and those symbols will be assigned with a selection probability of zero for subsequent encodings. This coding scheme is suitable for applications with limited feedback capacity such as satellite networks \cite{byers2002digital}, as we require the decoder to opportunistically send just one bit of feedback when certain conditions are met. Note that the coding schemes proposed in this work are presented as enhancements of LT codes for the case some feedback communication is available. The motive to base our codes on the LT degree distributions is to accommodate cases where feedback is extremely limited or completely unavailable, in which case we fall back to the standard LT performance. That said, the same methodology can apply to different rateless codes in the literature, and to others to be proposed in the future.
   
\subsection{Organization}
The rest of this paper is organized as follows.  In Section~\ref{related work} we describe the problem setup and review various related coding schemes. Section \ref{RLT} presents the most general form of our coding scheme. Section~\ref{Delete-and-Conquer} describes the primitive form of our codes adapted for constrained feedback applications.  Coding analysis for short block lengths is presented in Section~\ref{overhead}, followed by maximum likelihood decoder analysis in Section \ref{maximum-likelihood}. Simulation results are presented in Section~\ref{simulation-results}.  We conclude with overall thoughts in Section \ref{conclusion}.


\section{Background}
\label{related work}
This section describes the problem setup and some previous work on rateless coding. 
\subsection{Preliminaries}
\label{preliminaries}

In the rateless coding setup, it is assumed that an encoder (broadcaster) has $k$ input symbols to transmit to all receivers over an erasure channel, and that there may exist a feedback channel through which receivers can send some information back to the encoder.   Luby Transform (LT) rateless codes  \cite{luby2002lt}, as the first practical realization of fountain codes, support full recovery of $k$ input symbols using an expected number of $k+O\left(\sqrt{k} \ln ^2(k/\delta)\right)$ error-free transmissions with a given recovery failure probability $\delta$.  To generate an output symbol, the encoder first picks a coding degree $d$ according to the Robust Soliton distribution \cite{luby2002lt}. Next, $d$ input symbols are chosen uniformly at random without replacement, and their sum over an appropriate finite field forms the output symbol. Indices of the $d$ selected input symbols, referred to as \emph{neighbors} of the output symbol, are made available (i.e., as meta-information) to the decoder. In total, the coding operations incur the  computational cost of $O\left(k \ln (k/\delta)\right)$.

The LT decoder (so-called Peeling decoder) uses a simple message passing algorithm, with a complexity typically less than traditional Gaussian elimination methods.  In one variant, the decoder finds all encoding symbols with degree $1$, whose neighbor can be immediately recovered. These recovered input symbols are then excluded from all output symbols that have them as neighbors, reducing the number of unknowns in those encoding symbols by one. This process continues until there exists no encoding symbol with degree $1$. Decoding succeeds if all input symbols are recovered; alternatively, decoding fails if, at some point, there is no output symbol with degree $1$.


\subsection{Related work}
Both fixed rate low-density parity-check (LDPC) codes~\cite{gallager1960low} and Turbo codes~\cite{berrou1993near} are capable of correcting bit errors, as well as erasures.  Byers \emph{et al.} in \cite{byers1998digital} have presented fixed rate Tornado codes as a class of simplified capacity-achieving LDPC codes. Within the context of rateless coding, random linear codes (see, for example, \cite{mackay2005fountain}) are well known due to their low communication overhead, but the encoding and decoding computations make them practical only for small message sizes. On the other hand, Luby Transform (LT)~\cite{luby2002lt} codes and their extensions such as Raptor codes \cite{shokrollahi2006raptor} are examples of rateless codes that are asymptotically optimal and also have computationally efficient encoding and decoding algorithms; unfortunately, they usually have poor performance for small block sizes~\cite{sorensen2012rateless} and various optimization methods~(e.g., \cite{hyytia2007optimal}) have been proposed for these cases. 

In some applications, like video streaming, intermediate symbol recovery is important, as it is desirable to decode some symbols before an entire frame has been received. The authors in~\cite{talari2009rateless} design a degree distribution for high intermediate symbol recovery rates.  Recently, there have also been proposed rateless protocols that utilize side information fed back from the decoder to the encoder. Based on the type of feedback used, they can be divided in the following categories:
\begin{itemize}
\item the receiver sends the \emph{number} of decoded symbols to the transmitter;
\item the receiver suggests to the transmitter \emph{what kind of degrees} it should use for future encodings; or
\item the receiver notifies the transmitter of \emph{which} input symbols have been recovered.
\end{itemize}

In the Real-Time (RT) oblivious codes~\cite{beimel2007rt}, the encoder starts with degree one symbols, and it increments the degree of encoding symbols based on feedback messages. In this case, feedbacks contain information on the number of recovered symbols.  Shifted LT (SLT) codes proposed in~\cite{hagedorn2009rateless} use the same type of feedback information as the RT codes, but instead of explicitly increasing the encoded symbols degree, the encoder shifts the Robust Soliton degree distribution.  There also exist rateless-type codes with real-time properties that allow intermediate knowledge of some input symbols as the decoding progresses. The authors in~\cite{kamra2006growth} propose Growth codes for the data collection within lossy sensor networks. Similar to the RT and SLT codes, Growth codes' degree increases as the coding progresses.

As another type of feedback, the receiver in~\cite{cassuto2011line}  has the ability to control the decoding progress by requesting particular degrees. In this method, the average number of output symbols required for decoding $k$ input symbols is shown to be upper bounded by $1.236k$.  Yet another type of feedback in \cite{sorensen2012rateless} contains the identity of recovered symbols, which are used by the encoder to redesign the degree distribution for subsequent transmissions. Recently, the authors in \cite{talari2014robust} have proposed a heuristic to use a hybrid feedback-based rateless codes, called LT-AF, in which the receiver alternates between two types of feedback messages: the first type of feedback contains the number of decoded symbols as in the SLT and RT codes, while the receiver requests a specific  input symbol through the second type of feedback.

In this paper, the type of feedback used is based on distance information by which the encoder learns about the state of individual symbols at the decoder side. Based on the feedback information, the encoder tunes a nonuniform selection distribution to choose neighbors of encoding symbols. This is a key point as all previous rateless codes are built upon a uniformly at random selection of neighbors. Moreover, we do not assume that the feedback channel is high bandwidth; instead, we strive for a parsimonious use of the feedback channel.  Indeed, in the primitive form of our codes, the feedback is exactly one bit (plus some header information) for each of a small fraction of received symbols.

\section{Nonuniform Rateless Codes: General Form}
\label{RLT}
Previous rateless codes with feedback are mostly designed based on modifying the output symbols degree distribution according to feedback information, e.g., by shifting the degree distribution, or by explicitly  increasing the degree. In these schemes, when a coding degree $d$ is picked,  $d$ input symbols are selected uniformly at random to construct an encoding symbol. Moreover, in most of previous works, feedback information does not provide a complete picture of the decoding state at the receiver side. For instance, sending the number of recovered symbols in~\cite{beimel2007rt,hagedorn2009rateless} does not provide information about the decoded symbols themselves. Within this context, we present a nonuniform rateless coding scheme wherein various input symbols are selected based upon a nonuniform distribution. In particular, the selection distribution is tuned according to feedback messages, which contain the distance of received symbols to the set of already recovered symbols at the receiver. The definition of distance quantity is as follows: 

\begin{definition}
Given a set of recovered symbols $C$ and an encoding symbol $y$ that has a set of neighbors $A$,  the \textbf{distance} between $y$ and $\mathcal{C}$ is defined as:
$$
\text{dist}(y, C) = \sum_{x_{i} \in A} \large{\mathds{1}}_{x_{i} \notin C},
$$
where $\large{\mathds{1}}_x$ is an indicator function that is equal to 1 if and only if $x$ is true.
\end{definition}

The distance quantity simply corresponds to the number of neighbors of $y$ that are not already decoded.
As an example, suppose that input symbols $x_1$, .., $x_4$ are encoded and transmitted in the following order: $y_1 = x_1 + x_2$, $y_2 = x_1 + x_4$, $y_3 = x_4$, and $y_4 = x_1 + x_2$\footnote{For the sake of clarity,  we assume that encoding and decoding are performed over the field $\mathbb{F}_2$.}. The distance from either $y_1$ or $y_2$ to the set of recovered symbols is $2$; thereafter, from $y_3$ the distance is $1$, and finally, from $y_4$ the distance is $0$ (as $x_1$ and $x_2$ will be decoded after receiving $y_3$).

Ultimately, the goal of the encoder is to generate encoding symbols based on the state of the decoder in such a way that more ``helpful'' symbols have a higher selection probability. To this end, the encoder uses distance information to estimate the probability that each input symbol has been decoded (at the receiver), and these estimates are used to bias the selection of input symbols. 
In this approach, the receiver can adjust the number of feedbacks using a parameter $s$ so that one feedback transmission follows after every $s$ received encoding symbols.  The parameter $s$ can be set to any arbitrary value, depending on the feedback channel available.

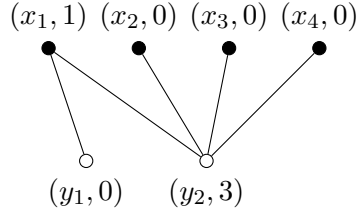
\begin{figure}[t]
\centering
  \begin{tikzpicture}[scale=1]
    \vertex[fill] (x1) at (0,1.5) [label=above:{$(x_{1}, 1)$}]{};
	\vertex[fill] (x2) at (1.2,1.5) [label=above:{$(x_{2}, 0)$}]{};
	\vertex[fill] (x3) at (2.4,1.5) [label=above:{$(x_{3}, 0)$}]{};
	\vertex[fill] (x4) at (3.6,1.5) [label=above:{$(x_{4}, 0)$}]{};
	\vertex (y1) at (.5,0) [label=below:{$(y_{1},0)$}]{};
	\vertex (y3) at (2.1,0) [label=below:{$(y_{2},3)$}]{};
	\path
		(x1) edge node[left]{} (y1)  
		(x1) edge node[left]{} (y3)  
		(x4) edge node[left]{} (y3)
		(x2) edge node[left]{} (y3) 
		(x3) edge node[left]{} (y3);
    \end{tikzpicture} 
    \label{fig:distance-graph-case2}

\caption{Distance graph labeling: A label $(x_i, q_i)$ implies that the input symbol $x_i$ has been decoded with probability $q_i$ up to the current state. Labels of output nodes $y_i$ are defined to be the number of neighbors of $y_i$ with a label of less than $1$.}
\label{distance-graph}
\end{figure}

\subsection{Processing distance information} 
In order to process distance information, the encoder constructs a bipartite graph wherein input symbols are placed on the top and encoding symbols at the bottom, as shown in Fig. \ref{distance-graph}. 
In this graph, \emph{labels} are assigned to input and output symbols. In particular, label of an input symbol corresponds to its probability of having been decoded, while label of an output symbol $y$ represents the number of neighbors of $y$ with label less than $1$. For instance, assume that after $t$ feedbacks, $n_t$  neighbors of $y$ are labeled $1$ (i.e, they have been decoded). Therefore, the label of $y$, denoted by $l_t$, is calculated as 
\begin{equation}
l_t = d - n_t;
\label{y-label}
\end{equation}
where $d$ is the degree of $y$. In this equation, the encoder excludes the recovered neighbors from the labeling process. Next, in order to calculate the label of an input symbol, we assume that the $t$-th feedback message contains the distance $f_t$ corresponding to the encoding symbol $y=\sum_{j \in A} x_j$. The label of a constituent symbol $x_j$ is then defined as: 

\begin{align}
\bb{q_{j,t} = \text{max}  \left\{q_{j,t-1}, \frac{{{l_t-1} \choose f_t}}{{l_t\choose f_t }} \right\} =  \text{max}  \left\{q_{j,t-1}, \frac{l_t-f_t}{l_t} \right\}. } 
\label{labeling}
\end{align} 
It should be noted that $l_t$ is the number of neighbors with a label less than $1$ and $f_t$ is the number of undecoded neighbors.  Therefore, the probability of having the neighbor $j$ decoded is calculated as $\frac{l_t-f_t}{l_t}$. Finally, after receiving a new feedback message, $q_j$ is updated to the maximum of its previous value and the calculated probability at the current step.  
For instance, assume that the encoding symbol $y=x_1+x_2+x_3+x_4$ has a distance of $2$ with the current state of decoder, meaning that two neighbors of $y$ have not been decoded yet (the encoder does not know which two symbols). If the encoder has already assigned label  $1$ to $x_1$ (i.e., $x_1$ has been decoded), then the encoder uniformly divides the distance of $2$ between the remaining symbols (i.e., $x_2$, $x_3$, and $x_4$), suggesting that each of them has been decoded with probability  $\frac{3-2}{3} = \frac{1}{3}$. It should be noted that the subscript $t$ in $q_{j,t}$ represents the evolution of $q$ as the coding proceeds. For simplicity, we drop it in our discussion.

Our labeling process tracks the state of the decoder by answering this question: \emph{what is the probability that an individual symbol $x_j$ has been decoded up to this point?} As an example, Fig. \ref{decoding-probability} shows a realization of input symbols at the encoder, where input symbols are assigned with a probability of having been decoded. In this case, $q_j=1$ (white color) implies that symbol $j$ has been recovered, while $q_j=0$ (black color) means that symbol $j$ has not been recovered yet. \bb{Therefore, input symbols are assigned with a weight between $0$ and $1$, which is used in the selection distributions defined in Section \ref{nonuniform-distribution}.}
\begin{figure}[t]
\centering
\includegraphics[scale=.4, trim=1cm 2cm 1.5cm 1.7cm, clip]{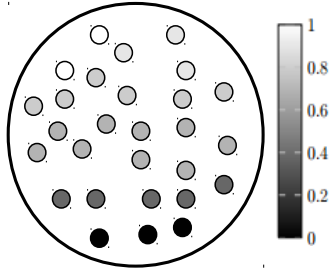}
\caption{The encoder estimates the probability of having been decoded (i.e., $q_j$'s) for input symbols. Probability $1$ (white color) implies that the symbol has been decoded, while probability  $0$ (black color) shows that the symbol has not been decoded yet.}
\label{decoding-probability}
\end{figure}

To examine the accuracy of  the estimated probability values against the actual decoder state, we use the Mean Absolute Error (MAE) quantity.  Assume that $b_j$ is an indicator function representing the state of symbol $j$ at the decoder such that $b_j = 1$ if symbol $j$ has been decoded and $b_j = 0$ otherwise. MAE is then calculated as:
$$
\text{MAE} = \frac{1}{k} \sum_{j=1} ^k |q_j - b_j|,
$$ 
in which $q_j$'s are estimated using \eqref{labeling}. Fig. \ref{MAE} shows the MAE quantity averaged over all feedback messages transmitted as the interval of feedback transmission (i.e., parameter $s$) increases. The results illustrate that decreasing the interval of feedback transmission (i.e., higher feedback rate) decreases the estimation error.

\emph{Remark 1 (cumulative feedback information):} Distance messages accumulate information across all received feedbacks. Specifically, assume that there exist $k$ input symbols at the encoder, and after receiving a new feedback message, say the $t$-th feedback, the encoder updates the probability vector $\mathbf{q}_t=\left(q_{1,t}, q_{2,t}, ..., q_{k,t}\right)$, where $q_{j,t}$ is the probability that the input symbol $j$ has been decoded. The encoder updates probability values corresponding to neighbors of the encoding symbol, and other probability values remain unchanged. This update mechanism allows the encoder to accumulate information across all feedback messages, noting that in previous feedback-based schemes (e.g., sending number of recovered symbols in \cite{beimel2007rt,hagedorn2009rateless}), a new feedback makes previously received feedback information obsolete.   

\begin{figure}[t!]
\centering
\includegraphics[scale=.3, trim=1cm .5cm 1cm 1cm, clip]{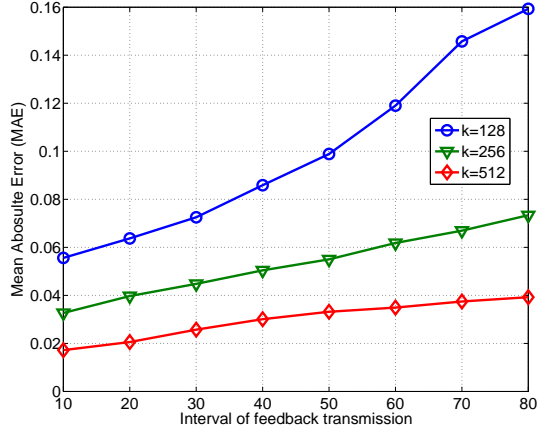}
\caption{Mean Absolute Error (MAE) for the estimated probability of having been decoded as the interval of feedback transmission (parameter $s$) increases. }
\label{MAE}
\end{figure}  

Distance feedbacks provide implicit information about the decoder's graph; however, it should be noted that one can envision other techniques to learn about the decoding graph. For instance, the decoder can send the whole decoding graph back to the encoder, and thus, the encoder would have full knowledge about the state of input symbols at the decoder's side. This method, however, incurs high communication overhead on the back channel. Next, we define nonuniform symbol selection distributions based on probability values $q_j$'s. 

\subsection{Nonuniform symbol selection}
\label{nonuniform-distribution}
We discussed that the encoder uses distance information to learn about the state of decoder. For the sake of concreteness, assume that the encoder estimates the input symbol $x_j$ has been decoded with probability  $q_j$, and it has  probability $p_j$ to be included in the next encoding symbol with degree $d$. We aim to design a symbol selection distribution that picks those $d$ input symbols that can achieve a maximum decoding progress. Specifically, the selection distribution should select $d-1$ symbols which have been recovered with a high probability, and a single symbol that has not been recovered with a high probability. To put in a formal framework, we have the following definition. 

\begin{definition}
For a given input symbol $x$ and a set of $d-1$ input symbols $A$, the Decoding Probability function $DP(x, A)$ is defined as the probability of immediate decoding the input symbol $x$ after receiving $y=x + \sum _{i \in A} x_i $.  
\end{definition}

 In order to decode an input symbol $x_j$ within a transmission, the transmitted symbol with the degree of $d$ should include the undecoded symbol $x_j$ and $d-1$ already decoded symbols. Symbol $x_j$ is not decoded with probability $1-q_j$, and $d-1$ symbols belonging to the set $A$ have already been decoded with probability $\prod_{i \in A} q_{i}$.   Therefore, at each step to transmit a symbol of degree $d$, the encoder should choose $d$ input symbols $\left( x_j^*, A^* \right) $ satisfying:
 \small
\begin{align}
 \left( x_j^*, A^* \right)&  = \argmax_{(x_j,A)} \ DP (x_j, A)  =   \argmax_{(x_j,A)}  \left(1-q_{j}\right) \prod_{\substack{i \in A \\ i \neq j}} q_{i}.
 \label{optimize-code}
\end{align}  
\normalsize
\bb{The solution of this optimization problem is deterministic; in other words, the encoder always picks $d-1$ symbols with the largest value of $q$ xored with a single symbol with the smallest value of $q$. However, it is desirable to preserve the same behavior with a probabilistic scheme such that if an input symbol $j$ is included in the solution of the deterministic formulation, it would also have a high probability to be picked by the probabilistic method.  This results in the following scheme to define the selection probability $p_j$:}
$$
  p_j \propto \left\{
  \begin{array}{l l}
    1-q_j &  \text{if} \ \ 0 \leqslant q_j < \frac{1}{2}; \\
    q_j &  \text{otherwise}.
  \end{array} \right.
$$

In the second step of designing the selection distribution, we note that a single unrecovered (with high probability) symbol should be included within each encoding symbol. Therefore, based on the value of $q_j$'s, input symbols are divided into two subsets: $U$ containing undecoded symbols, and $D$ containing decoded symbols.
Input symbols with $0 \leqslant q < \frac{1}{2}$ are included in $U$ and the rest are added to the set $D$, and thus we may construct an encoding symbol of degree $d$ by selecting one symbol from $U$ based on the selection distribution $P_U$, and $d-1$ symbols from $D$ according to the distribution $P_D$. Selection distributions $P_U$ and $P_D$ are defined as follows:

 \begin{equation}
  P_U (j) = \left\{
  \begin{array}{l l}
    \frac{1-q_j}{\sum_{i=1}^{k-m} 1-q_i} &  \text{if} \ j \in U ; \\
    0 &  \text{otherwise}.
  \end{array} \right.
  \qquad
   P_D (j) = \left\{
  \begin{array}{l l}
    \frac{q_j}{\sum_{i=1}^{m} q_i} &  \text{if} \ j \in D ; \\
    0 &  \text{otherwise}.
  \end{array} \right.
    \label{nonuniform}
 \end{equation}
\bb{In the distributions, $m$ is the size of subset $D$. Finally, the encoder transmits the xor of $d$ selected symbols.}  

This scheme based on the distributions $P_D$ and $P_U$ is refereed to as the \emph{All-Distance} codes since all distance feedbacks are needed to estimate probability value $q_j$'s. Next, we relax this scheme in a way that, instead of sending all distance values, the decoder \emph{quantizes} distance values and allocates only a single bit feedback for each received encoding symbol.

\subsection{Quantized distance codes}
The All-Distance codes work based on estimating probability values $q_j$'s from distance information. In  this case, at most $n\log(d_{\text{max}})$ bits are sent back to the encoder, as each of $n$ encoding symbols can have distance $d_{max}$, which is the maximum degree of an encoding symbol, noting that $d_{max}$ can be at most equal to $k$. 

To limit amount of feedback, we consider a scheme with a single bit feedback per received encoding symbol. In particular, this scheme is based on the same idea of splitting input symbols into two subsets; however, instead of having an exact estimation of probability value $q_j$'s, the decoder decides to send a feedback $0$ or $1$ based on the distance of a received symbol. More precisely, the decoder calculates the ratio of distance to degree for a received symbol, and if the ratio is larger than $\frac{1}{2}$, it implies that majority of neighbors within the received encoding symbol have not been recovered. In this case, the decoder allocates a single bit of $0$ as the feedback message. On the other hand, if the calculated ratio is smaller than $\frac{1}{2}$, it shows that majority of neighbors have been decoded and feedback message would be $1$. \bb{To limit the number of feedback transmissions, the receiver bundles the 1-bit feedback messages together for every interval of $s$ received encoding symbols, and sends the $s$-bit messages back to the encoder.  }

At the encoder side and upon receiving a feedback message $0$, corresponding neighbors are assigned with $q_j = 0$ and thus added to the subset $U$. Conversely, if the received feedback contains a bit of $1$, corresponding neighbors are assigned with $q_j = 1$ and grouped into $D$. This \emph{quantized} version of $q_j$ is equivalent to evaluating $\lfloor q_j \rceil$ in \eqref{nonuniform} ($\lfloor x \rceil$ rounds $x$ to its nearest integer). As a result, the $P_U$ and $P_D$ distributions would become uniformly distributed over the subsets $D$ and $U$ respectively. However, it should be noted that with a high probability only a single undecoded symbol is included within each transmission. Hence, splitting a single uniform distribution defined over all input symbols (as it has been used in previous rateless codes) into two disjoint uniform selection distributions, can significantly improve the intermediate performance of rateless codes.  

In terms of total amount of feedback, decoder sends exactly one bit feedback per received encoding symbol, where the total number of encoding symbols is $(1+\epsilon)k$ for a small value of $\epsilon$. Recall that the motivation behind the distance type feedback is to learn about the state of individual symbols at the decoder side. However, an alternative and trivial solution includes sending the identity of recovered symbols back to the encoder with potentially more feedback that could be up to $k\log(k)$ bits. \bb{In this case, it may not be clear how the encoder uses deterministic information on the identity of recovered symbols. In fact, the authors in \cite{sorensen2012rateless} use the identity of recovered symbols in order to redesign the primary degree distribution through a computationally expensive algorithm; on the other hand, we use distance information through a probabilistic scheme to dynamically assign nonuniform selection weights to input symbols.}


\section{Nonuniform Rateless Codes: Primitive Form}
\label{Delete-and-Conquer}
In most of communication systems, a nominal utilization of the back channel is desirable as the bandwidth is mainly provisioned for forward transmissions. In the previous section, we presented a nonuniform coding scheme based on distance feedbacks, wherein all distance information are fed back to the encoder. In the scheme based on quantized distance information, decoder needs one bit feedback per received encoding symbol. In this section, we establish a coding scheme called Delete-and-Conquer with a more limited use of the feedback channel. In this case, the decoder is allowed to transmit one bit feedback for  a small fraction of received encoding symbols, when certain conditions are met.

\subsection{Delete-and-Conquer codes}
Recalling the definition of the distance metric, a distance $0$ happens if and only if all neighbors of the received encoding symbol have already been decoded. Similarly, a distance $1$ occurs in the case that there is only a single undecoded neighbor, which can then be recovered uniquely. In other words, a distance of 0 or 1 provides information about the recovery of neighbors that are part of a received linear combination.

A Delete-and-Conquer encoder performs similar to the LT encoder in that it first picks a coding degree $d$ from the degree distribution. However, in the second step the encoder selects $d$ symbols from a \emph{subset} of input symbols. Specifically, upon receiving a feedback message, the encoder assigns a selection probability of zero to the  neighbors of the acknowledged encoding symbol, while remaining symbols would have an equal selection probability. Intuitively, the encoder deletes recovered symbols and continues with a smaller block of symbols; in so doing, the encoder also rescales the primary degree distribution (e.g., the Robust Soliton distribution denoted by $\Omega_k$) to the smaller set of input symbols with size $k-m$, in which $m$ is the number of deleted symbols. Excluding recovered symbols from future transmissions reduces the computational complexity at the encoder and decoder. Algorithm 1 gives the pseudo-code of the Delete-and-Conquer encoding scheme.

\begin{algorithm}[t]
\caption{Delete-and-Conquer Encoding $(x_1, x_2, .., x_k)$}
\begin{varwidth}{\dimexpr\linewidth-2\fboxsep-2\fboxrule\relax}
\begin{algorithmic}[1]
\State $z \gets 0$ and $m \gets 0$ 
\State $\mathcal{A} \gets \{x_1, x_2, ..., x_k\}$ and $\mathcal{B} \gets 	\emptyset$
\While{$ z < k $}
\State Pick a coding degree $d$ from the distribution $\Omega_{k-m}$ 
\State Select $d$ symbols uniformly at random from set $\mathcal{A}$
\State Send symbol $y$ as XOR of $d$ selected symbols
\If {feedback$(y)$ = \textbf{true}}
\State $\mathcal{C} \gets$ Neighbors of $y$
\State $\mathcal{B} \gets B \cup \mathcal{C}$ 
\State $m \gets |\mathcal{B}|$
\State $\mathcal{A} \gets \mathcal{A} \setminus \mathcal{B}$
\EndIf
\If {Terminate = \textbf{true}}
\State $z=k$
\EndIf
\EndWhile
\end{algorithmic}
\end{varwidth}%
\end{algorithm}

The Delete-and-Conquer decoder is based on Peeling decoder with a slight modification that upon receiving a new encoding symbol, the decoder checks if the distance is equal to $0$ or $1$. The $0$ and $1$ distance feedbacks are indeed a generalization of the traditional acknowledgment to the coded cases in that they notify the recovery of a group of input symbols involved in an encoding. The pseudo-code of the Delete-and-Conquer decoding is provided in Algorithm 2. 

\begin{algorithm}[h]
\caption{Delete-and-Conquer Decoding of $k$ symbols}
\begin{varwidth}{\dimexpr\linewidth-2\fboxsep-2\fboxrule\relax}
\begin{algorithmic}[1]
\State $\mathcal{S} \gets \emptyset$   \Comment{$\mathcal{S}$ is the set of recovered symbols}
\While{$|\mathcal{S}| < k$}
\State $y \gets $ Received encoded symbol
\If { Distance$(y, \mathcal{S}) = 0$ or $1$}
\State Send a feedback and set feedback$(y)$ \textbf{true}
\EndIf
\State \textbf{call} Peeling-Decoder
\State Update $\mathcal{S}$

\If {$|\mathcal{S}| = k$}
\State Terminate = \textbf{true}
\EndIf
\EndWhile
\State
\Function{Distance}{$y$, $\mathcal{S}$}
\State distance $\gets 0$ 
\For{all neighbors $x_i$ of $y$}
\If  {$x_i \notin \mathcal{S}$}
\State Increment distance
\EndIf       
\EndFor
\State \textbf{return} distance
\EndFunction
\end{algorithmic}
\end{varwidth}
\end{algorithm}

\emph{Remark 2 (probabilistic feedback):} In the case of severe constrained feedback, the receiver adds the mechanism of probabilistic feedback control, in which feedbacks are only transmitted with a given probability. An optimal feedback probability can be determined according to the capacity of back channel and the cost of feedback transmission. For instance, Fig. \ref{prob-feedback-overhead} shows simulation results of the coding overhead (i.e., number of forward transmissions normalized with respect to the number of input symbols) as the probability of sending $0$ and $1$ feedbacks increases. The results illustrate that when the probability of sending $0$ and $1$ distance feedbacks increases, amount of forward communications decreases. On the other hand, as Fig. \ref{prob-feedback-amount} shows, the (normalized) number of transmitted feedback messages increases with the probability, as expected. Therefore, by adjusting the probability of feedback transmission, decoder would be able to control the number of forward and feedback transmissions.  

   \begin{figure}[t!]
  \centering
  \subfigure[Number of forward transmissions]{\includegraphics[scale=0.33, trim=1cm .1cm 1cm .75cm, clip]{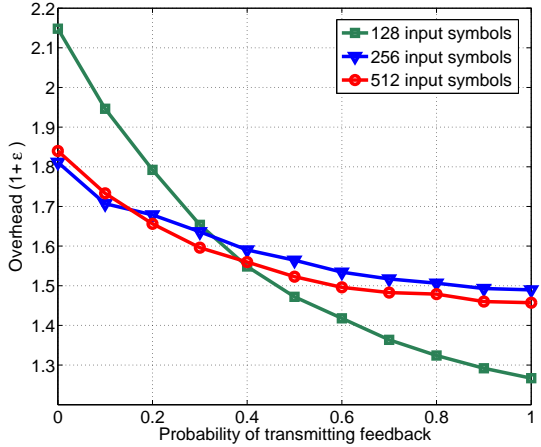}
  \label{prob-feedback-overhead}
  } \hspace{1cm}
  \subfigure[Number of feedbacks]{\includegraphics[scale=0.33, trim=1cm .1cm 1cm .75cm, clip]{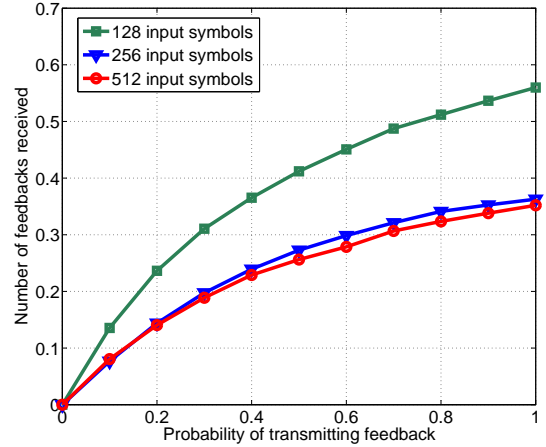}
  \label{prob-feedback-amount}
  }
 \caption[Probabilistic feedback control]{\subref{prob-feedback-overhead} Number of forward transmissions (normalized to the number of input symbols) needed by Delete-and-Conquer codes as the probability of sending feedback increases \subref{prob-feedback-amount} Number of feedback (normalized to the number of input symbols) as the probability of sending feedback increases.}
  \label{prob-feedback}
\end{figure}

\bb{Note that the Delete-and-Conquer encoder learns about the recovered symbols using a light-weight feedback and excludes the recovered symbols from subsequent transmissions. Alternatively, the receiver can send the identity of recovered symbols back to the transmitter. In this case, however, total amount of feedback  (up to $k \log(k)$ bits) is larger than the Delete-and-Conquer scheme. In fact, Fig. \ref{prob-feedback-amount} experimentally shows that total amount of feedback sent by the Delete-and-Conquer decoder is strictly less than $k$ bits. }  
 
\emph{Remark 3 (broadcast scenario):} To generalize the Delete-and-Conquer codes to the broadcast scenarios, we note that excluding a subset of recovered symbols from subsequent transmissions may increase the total number of transmissions (compared with when all recovered symbols are excluded), but it does not impede the decoding progress. In the worst case, no symbol is dropped from the encoding set, which reduces our codes to the original LT codes. Therefore, in a broadcast scenario, the encoder can simply take the intersection of collected feedbacks from different receivers, and  proceed with excluding those symbols confirmed to be recovered by all receivers.

%

\section{Short Block Length Analysis}
In this section, we precisely analyze the performance gains of the primitive form of our codes for very short block length of $k=2$ and $k=3$ symbols.  Although such small block lengths are not practical, they provide some insight into the Delete-and-Conquer scheme. For larger block lengths, our exact calculation of overhead in terms of degree probabilities becomes unwieldy. For the analysis purposes, we assume that the forward channel is lossless. 
\label{overhead}
\subsection{Block length k=2}  
As the first case, we consider the block length of $k=2$ symbols, in which two input symbols $x_1$ and $x_2$ are encoded. We assume that the probability of degree 1 transmission is equal to $2p$, and the probability of degree 2 transmission is $1-2p$. Therefore, an encoded symbol is equal to $x_1$ or $x_2$ each with probability $p$, and $x_1 + x_2$ with probability $1-2p$.

\begin{lemma}
\textit{For the block length $k=2$, if the probability of degree 1 transmission is $2p$, then the Delete-and-Conquer codes require an expected number of $\frac{4p^2+1}{2p}$ forward transmissions and $2p$ feedback transmissions for successful decoding.}
\label{savings}
\end{lemma}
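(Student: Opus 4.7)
My plan is to condition on the degree of the first transmission and exploit the simplification that with $k=2$ every degree-$2$ encoding coincides with the single codeword $x_1 + x_2$. This reduces the entire analysis to a two-branch case split, after which both expectations fall out of the law of total expectation.

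In the branch where the first transmission has degree $1$ (probability $2p$), the received symbol $x_i$ has distance $1$, feedback is sent, and the encoder excludes $x_i$ from $\mathcal{A}$. The rescaled distribution $\Omega_1$ then forces the next transmission to be the lone remaining symbol $x_j$, which completes decoding. Hence this branch contributes exactly $2$ forward transmissions. Counting feedbacks under the natural convention that the message accompanying the transmission that completes decoding is treated as a terminate signal rather than an additional feedback, this branch contributes a single counted feedback.

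In the branch where the first transmission has degree $2$ (probability $1-2p$), the symbol $x_1 + x_2$ has distance $2$, no feedback is triggered, and the encoder's state is unchanged. Every subsequent degree-$2$ transmission simply reproduces $x_1 + x_2$ --- a duplicate that cannot be peeled and again elicits no feedback --- so the process effectively waits for the first degree-$1$ arrival, whose index after the opening transmission is geometric with parameter $2p$ and expectation $1/(2p)$. When that degree-$1$ symbol finally arrives it is decoded directly and the buffered $x_1+x_2$ is peeled to recover the other input, terminating decoding. This branch therefore contributes $1 + 1/(2p)$ forward transmissions and, since the terminating degree-$1$ is also the decoder's only distance-$1$ event, zero counted feedbacks.

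Assembling the two branches by the law of total expectation gives
\[
E[N] = 2p \cdot 2 + (1-2p)\!\left(1 + \tfrac{1}{2p}\right) = \tfrac{4p^2+1}{2p}, \qquad E[F] = 2p \cdot 1 + (1-2p) \cdot 0 = 2p,
\]
which are the claimed quantities. The only real obstacle is bookkeeping in the second branch: one must verify that repeated $x_1 + x_2$ transmissions are informationally inert (so the encoder state is genuinely preserved and the waiting time is exactly geometric) and that one degree-$1$ arrival together with a single peeling step finishes decoding, so that no further forward symbol is needed.
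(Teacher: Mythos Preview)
Your proof is correct. The paper reaches the same two numbers by enumerating, for each $n\geq 2$, the sequences of received symbols that terminate decoding in exactly $n$ forward transmissions, computing their probabilities, and then evaluating $\sum_n n\,Q(n)$ as an explicit series. You instead condition on the degree of the first transmission and observe that in the degree-$2$ branch the additional wait for a degree-$1$ arrival is geometric with mean $1/(2p)$, so the expectation follows from the law of total expectation without summing a series. The two arguments are equivalent in content --- both rest on the observation that for $k=2$ every degree-$2$ encoding is the single symbol $x_1+x_2$, which neither advances the Peeling decoder nor triggers a feedback --- but your conditioning packaging is slightly more streamlined. Your feedback bookkeeping (the final distance-$1$ event is absorbed into the terminate signal and not counted separately) also matches the paper's convention exactly.
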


\begin{proof}
A Delete-and-Conquer decoder can successfully decode two symbols within $n=2$ transmissions under the following possibilities for the received symbols:
$$
\begin{aligned}
\{x_1,x_2\}, \{x_2,x_1\}, \{x_1 + x_2, x_1\}, \{x_1 + x_2,x_2\}.     
\end{aligned}
$$ 
The probability of terminating after two transmissions is obtained as $4p-4p^2$. Similarly, the decoder would successfully recover $x_1$ and $x_2$ within $n \geq 3$ transmissions in the case of the following received symbols:
$$
\begin{aligned}
\{\overbrace{x_1 + x_2, ..., x_1 + x_2}^{n-1 \ \text{symbols}}, x_1\}, \ \{\overbrace{x_1 + x_2, ..., x_1 + x_2}^{n-1 \ \text{symbols}}, x_2\}. 
\end{aligned}
$$ 
The probability of successful recovery in  this case would be:
$$
Q(n)= (1-2p)^{n-1}\left(p+p\right), \ \ \ n\geq 3; 
$$
and therefore, the expected number of forward transmissions for the Delete-and-Conquer scheme is equal to: 
\begin{equation}
\bar{n}_{Del}=2(4p-4p^2)+ \sum_{n=3}^{\infty} n Q(n) = \frac{4p^2+1}{2p}.
\label{delete}
\end{equation}

To calculate the expected number of feedbacks transmitted, we note that one feedback is transmitted only in the cases of received symbols $\{x_1,x_2\}$ and $\{x_2,x_1\}$ each happens with probability $p$, and thus the expected number of feedbacks transmitted would be $2p$. It should be noted that the last feedback message is excluded from the count, as it is also needed by other coding schemes to stop the encoder from further transmissions.
\end{proof}

\begin{theorem}
\textit{For the block length $k=2$, the Delete-and-Conquer codes provide a savings of $\frac{2p^2}{1-p}$ in forward transmissions compared with the LT codes.}
\label{savings}
\end{theorem}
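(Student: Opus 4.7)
The plan is to compute the expected number of forward transmissions required by the (feedback-free) LT code for $k=2$, and then subtract the Delete-and-Conquer expectation $\bar{n}_{Del}=\frac{4p^2+1}{2p}$ that was established in Lemma~1. Since the savings quantity is simply $\bar{n}_{LT}-\bar{n}_{Del}$, the entire proof reduces to an exact computation of $\bar{n}_{LT}$ under the same degree distribution (degree $1$ with probability $2p$, degree $2$ with probability $1-2p$).

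First I would note that for $k=2$ the three possible encoding symbols are $x_1$, $x_2$, and $x_1+x_2$, occurring independently with probabilities $p$, $p$, and $1-2p$ respectively. The LT decoder can recover both inputs as soon as it has received any two distinct (equivalently, linearly independent over $\mathbb{F}_2$) symbols. Thus the number of transmissions $N$ is the waiting time to observe two distinct values, which I would evaluate by conditioning on the identity of the first transmitted symbol: given the first symbol, the remaining wait is a geometric random variable whose success probability is the probability that a new draw differs from the first. This yields
\begin{equation*}
\bar{n}_{LT}=1+p\cdot\frac{1}{1-p}+p\cdot\frac{1}{1-p}+(1-2p)\cdot\frac{1}{2p}=\frac{2p}{1-p}+\frac{1}{2p}.
\end{equation*}

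Finally I would subtract the Delete-and-Conquer expectation $\bar{n}_{Del}=2p+\frac{1}{2p}$ obtained in Lemma~1. The $\frac{1}{2p}$ terms cancel, leaving $\frac{2p}{1-p}-2p$, which simplifies to $\frac{2p-2p(1-p)}{1-p}=\frac{2p^2}{1-p}$, exactly the claimed savings.

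The computation is essentially mechanical once the stopping rule for the LT decoder is pinned down, so there is no single hard step. The most delicate point is a modeling issue rather than an analytic one: justifying that the appropriate baseline for the LT code is the waiting time until two linearly independent encodings have been received (i.e., assuming the encoder transmits freely without any feedback and decoding is attempted at each step), matching the convention implicit in Lemma~1 that excludes the terminating acknowledgment from the count.
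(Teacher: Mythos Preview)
Your proposal is correct and follows essentially the same route as the paper: both arguments condition on the identity of the first transmitted symbol and exploit the fact that, without feedback, decoding terminates as soon as a symbol different from the first one arrives. The only difference is presentational---the paper writes down the exact termination probability $P(n)=2p^{n-1}(1-p)+(1-2p)^{n-1}(2p)$ and sums $\sum_{n\ge 2} nP(n)$ to obtain $\bar{n}_{LT}=\frac{4p^2-p+1}{2p(1-p)}$, whereas you invoke the geometric mean $1/q$ directly; the two expressions for $\bar{n}_{LT}$ are of course equal, and the subtraction step is identical.
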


\begin{proof}
First, we calculate the expected number of transmissions required by the LT codes to recover all symbols. To this end, we obtain the probability of full recovery within $n \geq 2$ transmissions. For instance, in the case of $n=2$, the decoder should receive one of the following combinations to successfully recover $x_1$ and $x_2$: 
\begin{align}
\{x_1,x_2\}, \{x_1,x_1 + x_2\},  \{x_2,  x_1\}, \{x_2,x_1 + x_2\}, \{x_1 + x_2,x_1\}, \{x_1 + x_2,x_2\}.    \nonumber
\end{align}
Accordingly, the probability of decoding within two transmissions can be calculated as $4p-6p^2$. For a general case of $n$ transmissions, one can see that the probability of recovery within  $n$ transmissions is:
$$
P(n)=2p^{n-1} \left(p +(1-2p)\right) + \left(1-2p\right)^{n-1} \left( p + p \right);
$$
and hence, the expected total number of transmissions is: \begin{equation}
\bar{n}_{LT}= \sum_{n=2}^{\infty} n  P(n) = \frac{4p^2-p+1}{2p(1-p)}.
\label{LT}
\end{equation} 
Using \eqref{delete} and \eqref{LT},  expected amount of savings $\bar{n}_{LT} - \bar{n}_{Del}$  is obtained. 

\end{proof}

\begin{figure}[t!]
\centering
\includegraphics[scale=.50, trim=.4cm 2.4cm 1cm .75cm, clip]{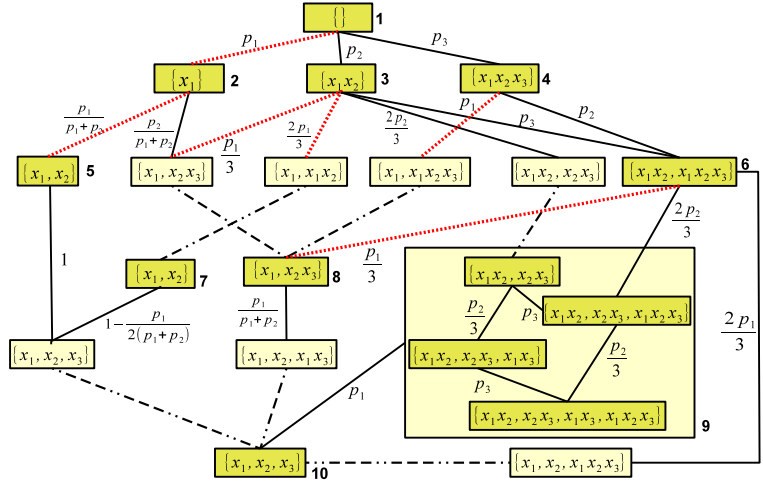}
\caption{State space of the Delete-and-Conquer scheme with $3$ input symbols.  The four states inside the box are considered as a single state. Notation $x_ix_j$ represents the symbol $x_i + x_j$, and dotted red lines represent transitions with a feedback.}
\label{fig:MarkovChain}
\end{figure}

\subsection{Block length k=3}
For the block length $k=3$, the authors in \cite{hyytia2007optimal} have derived the expected number of encoding symbols required by the LT codes for full recovery. In this model, the set of received symbols at the decoder defines a state of an absorbing Markov chain, and the process (i.e., transmission of encoded symbols) ends when it reaches to the absorbing state that includes all input symbols decoded. We similarly adapt this approach to obtain the Markov chain for the Delete-and-Conquer scheme with $3$ input symbols. The corresponding Markov chain shown in Fig. \ref{fig:MarkovChain}, includes states up to the permutations of input symbols, e.g., two states $\{x_1, x_2 + x_3\}$ and $\{x_2, x_1 + x_3\}$ are isomorphic and it is enough to consider a single unique state for each group of isomorphic states. \bb{In this figure, darker states are irreducible by the decoder in that no symbol can be further recovered, whereas other states can be immediately reduced by the decoder to the darker ones.}. By constructing the state transition matrix $\mathbf{P}$ as
 $$
 \mathbf{P} = \begin{pmatrix}
   \mathbf{Q}&\mathbf{R}   \\
   \mathbf{0}& \mathbf{I}\\
 \end{pmatrix},
 $$
we can compute the expected number of steps (transmissions) from the initial state to the absorbing state $\{x_1, x_2, x_3\}$. In the notation, matrix $\mathbf{Q}$ represents the transition probabilities between transient states, $\mathbf{R}$ denotes the probabilities between transient states and the absorbing state, and  $\mathbf{I}$ is an identity matrix.

\begin{theorem}
\textit{For the block length $k=3$, given that $p_j$ is the probability of transmitting an encoded symbol with degree $j$, the expected number of transmissions required by the Delete-and-Conquer scheme for successful decoding is:
\begin{align}
 \hspace*{-0.05cm}
\bar{n}_{Del} =    \frac{1}{p_1} +  \frac{p_2}{3p_1+2p_2}+ \frac{p_2^2}{p_1+p_2} - \frac{8p_2^3}{(p_1+2p_2)(p_2-3)} + \frac{3p_1-4p_2+3p_1p_2-3p_2^3+3}{3-p_2} .
\label{delete-num-transmit}
\end{align}}
\end{theorem}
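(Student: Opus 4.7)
The plan is to follow the absorbing Markov chain framework already invoked for the LT analysis of block length $k=3$ in \cite{hyytia2007optimal}, but with two crucial modifications that reflect the Delete-and-Conquer mechanism: (i) whenever a feedback is triggered (i.e., the received symbol has distance 0 or 1 to the current decoding set), the encoder rescales its degree distribution to $\Omega_{k-m}$ on the surviving symbols, so the outgoing transition probabilities from such a state use a different set of degree probabilities than the initial $\{p_1,p_2,p_3\}$; (ii) after deletion, the encoding alphabet shrinks, which collapses several states that would be distinct under plain LT encoding. I would begin by enumerating the transient states of Fig.~\ref{fig:MarkovChain} up to symbol-permutation isomorphism and classifying their outgoing transitions into ordinary transitions (no feedback) and feedback transitions (the dotted red arrows), so that each transition carries the correct degree probability.

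Next I would assemble the transition matrix $\mathbf{P}$ in the block form $\mathbf{P}=\bigl(\begin{smallmatrix}\mathbf{Q}&\mathbf{R}\\ \mathbf{0}&\mathbf{I}\end{smallmatrix}\bigr)$, with the absorbing state being $\{x_1,x_2,x_3\}$. For the few transient states in which a single symbol has been deleted (so two input symbols remain), I would invoke Lemma~\ref{savings} with its degree-1 probability now coming from $\Omega_2$, and for states where two symbols have been deleted the remaining symbol is recovered deterministically in one further transmission (since $\Omega_1$ is a point mass on degree 1). This is what will introduce the term $1/p_1$ (the geometric waiting time for the first degree-1 transmission on the full set) as the dominant contribution, and will give rise to the rational dependence on $p_1+p_2$ and on $3p_1+2p_2$, which come from normalizing the ordinary-transition probabilities out of the degree-2 classes of states.

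The expected absorption time from the initial state $\emptyset$ is then $\bar n_{Del}=\mathbf{e}_1^\top\mathbf{N}\mathbf{1}$, where $\mathbf{N}=(\mathbf{I}-\mathbf{Q})^{-1}$ is the fundamental matrix and $\mathbf{e}_1$ selects the initial-state row. I would compute $\mathbf{N}$ by Gaussian elimination on $\mathbf{I}-\mathbf{Q}$, exploiting the fact that many rows of $\mathbf{Q}$ are sparse (most transient states have only a handful of outgoing non-self transitions). Summing the first row of $\mathbf{N}$ and substituting $p_3=1-p_1-p_2$ wherever it appears should yield the stated closed form after routine rational simplification.

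The main obstacle will be neither the Markov-chain setup nor the matrix inversion per se, but the careful bookkeeping of which degree distribution is active at each state: a feedback edge leaving a $k=3$ state lands in a $k{=}2$ regime, so its successor transitions are governed by $\Omega_2$ rather than $\{p_1,p_2,p_3\}$, and conflating the two would produce spurious cross terms. A secondary obstacle is algebraic: the claimed expression mixes terms with denominators $p_1$, $3p_1+2p_2$, $p_1+p_2$, $(p_1+2p_2)(p_2-3)$, and $3-p_2$, so I would verify correctness by specializing $p_1,p_2$ to values where $\bar n_{Del}$ can be computed independently (e.g.\ $p_1=1$, which forces $\bar n_{Del}=3$, or the symmetric case $p_1=p_2=p_3=1/3$) and checking consistency with Lemma~\ref{savings} in the limit where the first transmission triggers an immediate deletion.
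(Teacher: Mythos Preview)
Your proposal is correct and follows essentially the same approach as the paper: model the decoder's evolution as an absorbing Markov chain on the isomorphism classes of states in Fig.~\ref{fig:MarkovChain}, rescale the degree probabilities to $p_1'=p_1/(p_1+p_2)$, $p_2'=p_2/(p_1+p_2)$ after a feedback-induced deletion, assemble $\mathbf{P}=\bigl(\begin{smallmatrix}\mathbf{Q}&\mathbf{R}\\ \mathbf{0}&\mathbf{I}\end{smallmatrix}\bigr)$, and read off $\bar n_{Del}=\bm\pi_0(\mathbf{I}-\mathbf{Q})^{-1}\mathbf{c}$. The paper carries this out by writing down the explicit $10\times 10$ transition matrix rather than factoring through Lemma~\ref{savings} for the post-deletion subchains, but the computation is the same; your side remark attributing the $1/p_1$ term to a geometric wait at the empty state is not quite right (the empty state has no self-loop), though this does not affect the argument.
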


 \begin{proof}
 In an absorbing Markov chain with a transition matrix $\mathbf{P}$ and the \emph{fundamental matrix} 
$$\mathbf{N = I+ Q+ Q^2 + ... = (I-Q)^{-1}},$$
the expected number of steps (transmissions) from the initial state to the absorbing one is:
\begin{equation}
\bar{n} = \bm \pi_{0} \mathbf{N} \mathbf{c},
\label{expected-steps}
\end{equation}
where $\bm \pi_{0} = (1 \ 0 \ ...\ 0)$ is the initial probability corresponding to the state of no symbols been transmitted, and $\mathbf{c}=(1 \ .. \ 1)^{T}$ \cite{ross2006introduction}. From Fig. \ref{fig:MarkovChain}, we obtain matrix $\mathbf{P}$ as:
\footnotesize
$$
\hspace{-.35cm}
\mathbf{P} = \begin{pmatrix}
  0 & 	p_1 	& 	p_2	 & 	p_3 	&	0 		& 	0	&	 0	&	0	 	&	0	&	0 \\
  0 & 	0 		&	0		& 		0		 & p_1'	& 	0	&	 0 &	p_2' 	&	0	&	0 \\
  0 & 0	& \frac{p_2}{3}	& 	0 & 	0	 & 	p3	& \frac{2p_1}{3} 	& \frac{p_1}{3} & \frac{2p_2}{3} &0\\
  0 &	0 	&	0 				&p_3 &0 & p_2& 0	&	p_1	&	0 &	0 \\
  0 & 0&0 & 0& 0& 0& 0&0&0&1 \\
  0 & 0 & 0& 0 & 0 & \frac{p_2+3p_3}{3} & 0 & \frac{p_1}{3} & \frac{2p_2}{3} & \frac{2p_1}{3} \\
  0 & 0 & 0 &0 &0 & 0 & \frac{p_1'}{2} & 0 & 0 & 1-\frac{p_1'}{2} \\
   0& 0& 0&0 &0 &0 &0 &1-p_1'&0&p_1' \\
	0& 0& 0&0 &0 &0 &0 &0&1-p_1&p_1 \\
  0 & 0& 0& 0& 0& 0& 0&0&0&1 
 \end{pmatrix}
$$
\normalsize
where we assume that after each symbol deletion at the encoder, the probabilities are normalized by dividing by the sum of the remaining degrees. For instance, after one exclusion, $p_1' \triangleq \frac{p_1}{p_1+p_2}$ and $p_2'\triangleq\frac{p_2}{p_1+p_2}$ would be the probability of degrees $1$ and $2$ transmissions respectively. This leads to the theorem statement.
\end{proof}

The expected number of transmissions for the LT codes has been derived in \cite{hyytia2007optimal} as follows:
\small
\begin{align}
 \hspace*{-0.05cm}
\bar{n}_{LT} = \dfrac{1}{p_1} + \dfrac{6p_1}{p_1-3}+ \dfrac{18p_1}{(3-p_2)(3-2p_1-p_2)}+\dfrac{9p_1}{2(p_1+p_2)(3p_1+2p_2)}.
\label{lt-num-transmit}
\end{align}
\normalsize
If the encoder uses only degree $1$ symbols (i.e., $p_1=1$), the expected number of required symbols for the LT codes is $\bar{n}_{LT}=5.5$, illustrating the effect of the coupon collector's problem; on the other hand, Delete-and-Conquer scheme requires only $\bar{n}_{Del}=3$ encoded symbols, which is the minimum possible number of forward transmissions. It should be noted that in this case, Delete-and-Conquer scheme turns into a no-coding ARQ method. An optimization in~\cite{hyytia2007optimal} results in a minimum number  of $4.046$ forward transmissions (with $p_1=0.524, p_2=0.366$, and $p_3=0.109$) for the LT codes, whereas Delete-and-Conquer coding with these same probabilities yields a total number of $\bar{n}_{Del}=3.678$ forward transmissions.  In general, we can numerically compare~\eqref{delete-num-transmit} to~\eqref{lt-num-transmit} to see that Delete-and-Conquer scheme can decrease the total number of forward transmissions up to $2.4$-fold. 

\begin{theorem}
\textit{For $k=3$ input symbols, the expected number of feedbacks transmitted by the Delete-and-Conquer scheme before conclusion (i.e. not including the termination signal) is:
\begin{equation}
\bar{f}_{Del} = \frac{3p_1}{3p_1+2p_2} + \frac{6p_1}{3-p_2}+\frac{p_1^2}{p_1+p_2} -2p_1.
\label{expected_feedback}
\end{equation}}
\end{theorem}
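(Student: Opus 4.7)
The plan is to reuse the absorbing Markov chain constructed in the preceding theorem and count, rather than the number of steps to absorption, the expected number of feedback-triggering transitions along the way. Recall that the decoder emits a feedback bit exactly when an incoming encoding symbol has distance $0$ or $1$ from the current set of recovered input symbols; in Fig.~\ref{fig:MarkovChain} these are precisely the dotted red edges.

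First I would read off, for each transient state $i$, the quantity $g_i$ defined as the sum of probabilities of its outgoing dotted edges that lead to another transient state, thereby automatically excluding the terminating feedback that would accompany absorption into $\{x_1,x_2,x_3\}$. This yields a column vector $\mathbf{g}$ over the transient states. Since the fundamental matrix $\mathbf{N}=(\mathbf{I}-\mathbf{Q})^{-1}$ satisfies that $N_{ij}$ is the expected number of visits to transient state $j$ starting from $i$, linearity of expectation gives
\begin{equation*}
\bar{f}_{Del} = \bm{\pi}_0\, \mathbf{N}\, \mathbf{g},
\end{equation*}
in direct analogy with \eqref{expected-steps}. Equivalently, I could total all feedback-bearing edges (including those into the absorbing state) and then subtract the expected number of feedback-bearing absorptions; the $-2p_1$ correction appearing in \eqref{expected_feedback} should drop out from whichever route I choose, since the chain can terminate through a dotted edge only from those states whose only remaining undecoded symbol is recovered by a degree-$1$ transmission.

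The main obstacle will be clerical rather than conceptual. One has to carry the normalized post-deletion probabilities $p_1' = p_1/(p_1+p_2)$ and $p_2' = p_2/(p_1+p_2)$ through the rows of $\mathbf{Q}$ that correspond to one-symbol-deleted states, decide for each outgoing arrow whether it records a distance-$0$ or distance-$1$ reception, invert $\mathbf{I}-\mathbf{Q}$ symbolically, and then simplify until the expression collapses to the four summands displayed in \eqref{expected_feedback}. The real risk is a bookkeeping slip in classifying which edges are feedback-bearing and which of those further absorb the chain; once that classification is nailed down, symbolic algebra closes out the proof.
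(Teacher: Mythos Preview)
Your approach is correct and will produce the same closed form, but the paper takes a somewhat different route. Rather than forming a per-state reward vector $\mathbf{g}$ and evaluating $\bm{\pi}_0\mathbf{N}\mathbf{g}$, the paper works with the matrix $\mathbf{H}=(\mathbf{N}-\mathbf{I})\mathbf{N}_{\mathrm{dg}}^{-1}$ of hitting probabilities, where $h_{ij}$ is the probability of ever visiting transient state $j$ starting from $i$, and writes $\bar f_{Del}=h_{12}+h_{12}h_{25}+h_{13}h_{37}+h_{13}h_{38}+h_{14}h_{48}$, one summand per dotted edge. This factorisation is valid because the chain, once self-loops are absorbed, is essentially a tree: each feedback-bearing edge is crossed at most once, so its expected traversal count equals the hitting probability of its head, which in turn splits as a product of $h$'s along the unique path from the initial state. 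Your Markov-reward formulation is the more general and systematic device and would work unchanged on an arbitrary chain, at the price of carrying the full symbolic inverse of $\mathbf{I}-\mathbf{Q}$; the paper's hitting-probability products exploit the near-acyclic structure to read each $h_{ij}$ directly from the local transitions, so no global inversion is needed. Either way, the same bookkeeping about which edges are feedback-bearing (and which of those absorb) is the only delicate step, and the algebra then collapses to \eqref{expected_feedback}.
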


\begin{proof}
In an absorbing Markov chain, the probability of ever visiting state $j$ when starting at a transient state $i$ is the entry $h_{ij}$ of the matrix $\mathbf{H=(N-I)N_{dg}^{-1}}$, where $\mathbf{N}$ is the fundamental matrix and $\mathbf{N_{dg}}$ is the diagonal matrix with the same diagonal as $\mathbf{N}$, and $\mathbf{I}$ is an identity matrix~\cite{ross2006introduction}. In Fig. \ref{fig:MarkovChain}, a feedback is transmitted when transitions along the dotted-line occur, e.g. a transition from the state $1$ to state $2$. Accordingly, the probability of such transitions, and hence the expected number of feedbacks transmitted is given by:
\begin{align}
\bar{f}_{Del} = h_{12} + h_{12} h_{25} + h_{13} h_{37} + h_{13} h_{38} + h_{14} h_{48}; \nonumber
\end{align}
\normalsize
from which the result follows.
\end{proof}

Based on the Theorem 2 and 3, we can calculate the optimal probability values $p_1^*$ and $p_2^*$ (and $p_3^*=1-p_1^*-p_2^*$) that minimize the total number of forward and feedback transmissions needed by the Delete-and-Conquer scheme. In other words:

$$
(p_1^*, p_2^*) = \argmin_{(p_1,p_2)} \left[\bar{n}_{Del} + \bar{f}_{Del} \right];
$$
which results in $(p_1^*, p_2^*) = (0.644, 0.206)$ (and $p_3^*=0.150$) with a minimum number of total transmissions $4.7247$. In this case, we simply considered the sum of forward and feedback transmissions. In a more general sense, we can assume that each transmission through the forward channel has a cost of $C_1$, while each feedback transmission has a cost of $C_2$. Therefore, the optimal probability values can be calculated as:
$$
(p_1^*, p_2^*) = \argmin_{(p_1,p_2)} \left[C_1\bar{n}_{Del} + C_2\bar{f}_{Del} \right].
$$   
Moreover, in comparison with the LT codes, one can notice that it is worthwhile to send feedback if:
$$
  C_1 \bar{n}_{Del} + C_2 \bar{f}_{Del} \leq C_1 \bar{n}_{LT} \Rightarrow \frac{\bar{f}_{Del}}{\bar{n}_{LT}-\bar{n}_{Del}} \leq \frac{C_1}{C_2}, 
$$ 
where $\bar{n}_{Del}$, $\bar{n}_{LT}$, and $\bar{f}_{Del}$ are calculated in (\ref{delete-num-transmit}), (\ref{lt-num-transmit}), and (\ref{expected_feedback}). 

\section{Maximum-Likelihood Decoder Analysis}
\label{maximum-likelihood}
In this section, we derive an upper-bound on the failure probability of the maximum likelihood (ML) decoder when used with the Delete-and-Conquer codes. We assume that there are $k$ input symbols at the transmitter, and that $n$ encoding symbols are received over a binary erasure channel (BEC). The ML decoding  over a BEC is equivalent to recovering $k$ information (input) symbols  from $n$ received encoding (output) symbols. Without loss of generality, we assume that each symbol is one bit; $\mathbf{x}$ is a row vector containing $k$ input bits; and $\mathbf{y}$ is the vector of $n$ output bits. Matrix $\mathbf{G}=[g_{i,j}]$ is an $n\times k$ adjacency matrix of the decoder graph, such that an entry $g_{i,j}$ is equal to $1$ if the $i^{th}$ output node has the $j^{th}$ input node as a neighbor. The ML decoder is then equivalent to solving a system of linear equations (with unknowns $\mathbf{x}$ and received symbols $\mathbf{y}$) of the form:

\begin{equation}
\mathbf{Gx}^T = \mathbf{y}^T. 
\label{ML-decoder}
\end{equation} 

%
%

Encoding symbols with a distance of $0$ or $1$ trigger a feedback message that causes the corresponding symbols to be excluded from future transmissions. Excluding the recovered symbols from subsequent transmissions is equivalent to setting the subsequent elements of the corresponding columns in $\mathbf{G}$ to zero. For instance, Fig. \ref{feedback-matrix} shows a realization of the matrix $\mathbf{G}$ in which the first feedback message acknowledges recovery of $x_2$. Thereafter the second column of $\mathbf{G}$ (i.e., the shaded part) is set to zero. 

\begin{figure}[h]
\centering
\includegraphics[scale=.43, trim=1.5cm 6.9cm 2.8cm 2.7cm, clip]{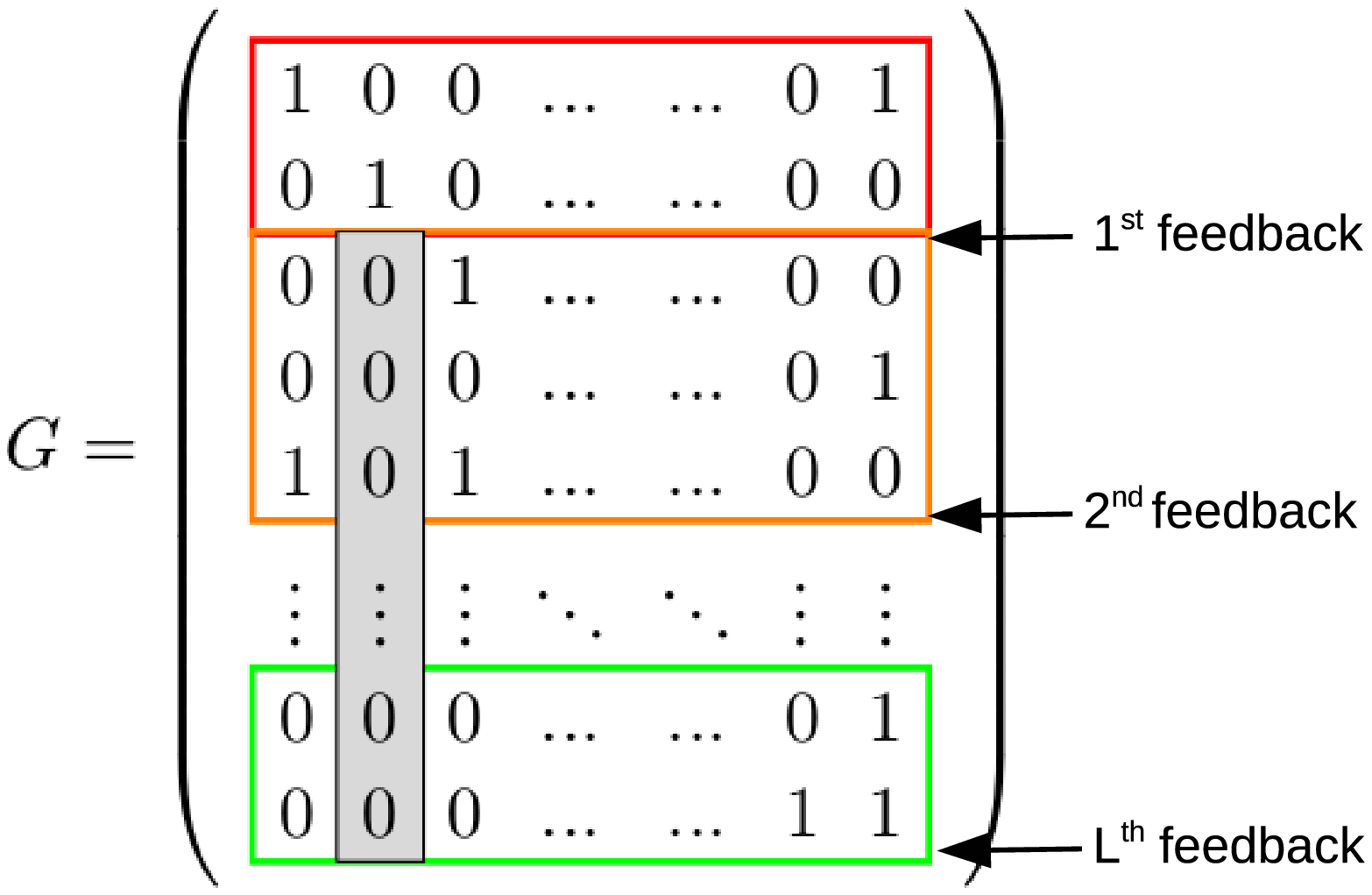}
\caption{Decoder matrix and the location of feedback transmission}
\label{feedback-matrix}
\end{figure}
 
The ML decoder failure is equivalent to the event that the adjacency matrix $\mathbf{G}$ in \eqref{ML-decoder} is not of full rank. Let $p_e$ be the probability that an input bit $j$ (for an arbitrary $j \in \{1,2,..k\}$) is not recoverable under the ML decoding rule. From \cite{rahnavard2007rateless}:

\begin{align}
p_e =  \text{Pr} \left\lbrace \exists \mathbf{x} \in GF(2^k), x_j=1 : \ \  \mathbf{G}\mathbf{x}^T= \mathbf{0}^T \right\rbrace  \leq \sum_{\substack{\mathbf{x} \in GF(2^k) \\ x_j =1}} \text{Pr} \left\lbrace \mathbf{G}\mathbf{x}^T= \mathbf{0}^T \right\rbrace. 
\label{error2}
\end{align}

In order to calculate $\text{Pr} \{ \mathbf{Gx}^T= \mathbf{0}^T \}$, we separately consider the rows of  $\mathbf{G}$ between consecutive feedback messages.  We assume that $L$ feedback messages are transmitted in total such that after receiving $t_1$ encoding symbols the first feedback message is transmitted, after receiving $t_2$ encoding symbols the second feedback is sent, and so forth. At the boundary points, we define $t_0=0$ and $t_{L}=n$. Therefore, there is no feedback within each interval of [0, $t_1$], ($t_1$, $t_2$], ..., ($t_{L-1}$,~n], and there is one feedback at the end of each interval, as shown in Fig. \ref{feedback-interval}. We assume that within the $i^{th}$ interval ($i=0, ..., L-1$) the coding window contains $k-m_i$ symbols, and thus the encoder uses a fixed degree distribution $\Omega_{k-m_i}(d)$ defined over the set of $k-m_i$ unacknowledged symbols.

\begin{figure}
\centering
\includegraphics[scale=.4, trim=.5cm 1cm 1cm 1cm, clip]{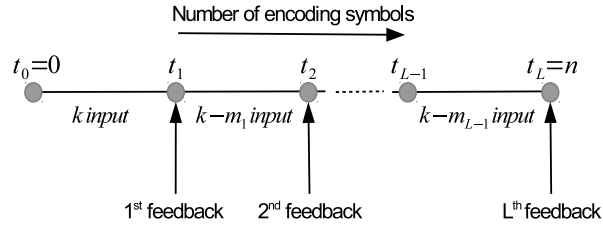}
\caption{Location of feedbacks and interval of coding}
\label{feedback-interval}
\end{figure}

 To calculate an upper bound on the decoder failure probability, we start with a single row of $\mathbf{G}$. Let $\mathbf{r}$ to be a row of degree $d$, and assume that the total number of $m$ symbols are acknowledged before transmitting $\mathbf{r}$; in other words, $m$ indices out of $k$ indices in $\mathbf{r}$ are forced to be zero. We define a row vector $\mathbf{f}$ such that $f_l =1$ if the $l^{th}$ symbol has been acknowledged, and $0$ otherwise (i.e., an indicator function on the index of acknowledged symbols). For a given input vector $\mathbf{x}$ with $||\mathbf{x}||_0 = w$ ($||.||_0$ is the 0-norm), we have the following lemma:
 \begin{lemma}
 Given that the row vector $\mathbf{r}$ has degree $d$ (i.e., $||\mathbf{r}||_0 = d$), the probability of $\mathbf{r}\mathbf{x}^T = 0$ is:
 $$
 p \bigl( \mathbf{x}, ||\mathbf{r}||_0 = d \bigr) =  \frac{\sum_{\substack{u=0,2,.., \min (2\lfloor \frac{d}{2}\rfloor, \bar{w}) }} \binom{\bar{w}}{u} \binom{k-m-\bar{w}}{d-u} }{\binom{k-m}{d}},
 $$
 \label{lemma4}
 \end{lemma}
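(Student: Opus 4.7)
The plan is to reduce the computation of $p\bigl(\mathbf{x}, \|\mathbf{r}\|_0 = d\bigr)$ to a direct combinatorial count that exploits (i) the parity interpretation of inner products over $\mathbb{F}_2$ and (ii) the fact that the support of $\mathbf{r}$ is forced to lie entirely in the unacknowledged index set.

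First I would unpack the notation $\bar w$. Reading the preceding paragraph, the $m$ positions with $f_l = 1$ have their entries in $\mathbf{r}$ forced to zero, and the row $\mathbf{r}$ is generated by picking a degree $d$ from $\Omega_{k-m}$ and then choosing $d$ positions uniformly at random from the $k-m$ unacknowledged indices. Thus, conditional on $\|\mathbf{r}\|_0 = d$, the support of $\mathbf{r}$ is a uniformly random $d$-subset of the $k-m$ unacknowledged positions, giving the denominator $\binom{k-m}{d}$. The natural meaning of $\bar w$ is then the weight of $\mathbf{x}$ restricted to the unacknowledged positions, i.e.\ $\bar w = \bigl|\mathrm{supp}(\mathbf{x}) \cap \{l : f_l = 0\}\bigr|$, since entries of $\mathbf{x}$ at acknowledged positions cannot contribute to $\mathbf{r}\mathbf{x}^T$.

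Second, I would translate the equation $\mathbf{r}\mathbf{x}^T = 0$ into the parity statement $\bigl|\mathrm{supp}(\mathbf{r}) \cap \mathrm{supp}(\mathbf{x})\bigr|$ is even. Because $\mathrm{supp}(\mathbf{r})$ is contained in the unacknowledged positions, this intersection equals $\mathrm{supp}(\mathbf{r}) \cap \bigl(\mathrm{supp}(\mathbf{x}) \cap \{l : f_l = 0\}\bigr)$, a set of size at most $\bar w$. So the favourable event is exactly that the uniformly random $d$-subset meets the distinguished $\bar w$-subset of the $k-m$ unacknowledged positions in an even number $u$ of elements. A direct count partitions the $d$ chosen positions into $u$ drawn from the $\bar w$ overlap candidates and $d-u$ drawn from the remaining $k-m-\bar w$ unacknowledged positions, contributing $\binom{\bar w}{u}\binom{k-m-\bar w}{d-u}$ to the numerator. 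Summing over even $u$ from $0$ up to $\min\bigl(2\lfloor d/2 \rfloor,\bar w\bigr)$ and dividing by $\binom{k-m}{d}$ yields the stated formula.

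The only point I would verify with care is the upper limit of the sum: $u$ is bounded by both $d$ (since $u \leq \|\mathbf{r}\|_0$) and by $\bar w$ (since $u \leq |\mathrm{supp}(\mathbf{x}) \cap \text{unack.}|$), and among even integers the binding upper bound is $\min\bigl(2\lfloor d/2\rfloor, \bar w\bigr)$; the standard convention that $\binom{a}{b}=0$ when $b>a$ or $b<0$ absorbs the remaining edge cases (e.g.\ $d-u > k-m-\bar w$). Beyond this bookkeeping, no genuine obstacle arises: once the support of $\mathbf{r}$ is recognised as a uniform $d$-subset of the $k-m$ unacknowledged indices, the lemma is an elementary hypergeometric-style identity, and its role in the subsequent failure-probability bound is to serve as the per-row building block that is later combined across the $L$ inter-feedback intervals.
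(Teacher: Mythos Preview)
Your proposal is correct and follows essentially the same line as the paper's own proof: both recognise that the support of $\mathbf{r}$ is a uniformly random $d$-subset of the $k-m$ unacknowledged indices, translate $\mathbf{r}\mathbf{x}^T=0$ into the parity condition on $|\mathrm{supp}(\mathbf{r})\cap\mathrm{supp}(\mathbf{x})|$, and then perform the hypergeometric count by splitting the $d$ chosen indices into $u$ from the $\bar w$ relevant positions and $d-u$ from the remaining $k-m-\bar w$. Your added remark on the upper summation limit and the binomial-coefficient edge conventions is a bit more explicit than the paper, but the argument is the same.
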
 
in which $\bar{w} = w- \langle \mathbf{x},\mathbf{f} \rangle$ with $\langle \mathbf{x},\mathbf{f} \rangle$ denoting the dot product of two vectors. 
\begin{proof}
The event $\mathbf{rx}^T = 0$ happens if and only if $\mathbf{r}$ has an even number of $1$'s in those indices of $j$ in which $x_j$ is equal to $1$ as well. Assume that $J=\{j_1, j_2, .., j_w\}$ is the set of indices in which $\mathbf{x}$ is $1$, and $A=\{a_1, a_2, ..., a_m\}$ is the set of acknowledged indices. Therefore, we need to choose an even number $u$ of indices that belong to $J$ but not to $A$. The number of these non-overlapping indices is given by $\bar{w} = w - \langle \mathbf{x},\mathbf{f} \rangle$. Because the degree of $\mathbf{r}$ is $d$,  we then need to choose $d-u$ symbols from the remaining $k-m- w + \langle \mathbf{x}, \mathbf{f} \rangle$ indices that belong neither to $J$ nor to $A$. Finally, given that the vector $\mathbf{r}$ is generated randomly (i.e., $d$ neighbors are selected uniformly at random from $k-m$ unacknowledged symbols), the result follows.
\end{proof}

Using Lemma \ref{lemma4} and the fact that $\mathbf{r}$ has degree $d$ with probability $\Omega_{k-m}(d)$, we have:  
\begin{equation}
p(\mathbf{x}) = \sum_{d=1}^{k-m} \Omega_{k-m}(d) p \bigl( \mathbf{x}, ||\mathbf{r}||_0 = d \bigr).
\label{probability}
\end{equation}
Now, we can extend this result to more than one row of $\mathbf{G}$ in the following manner. Let us denote the $t_{i} - t_{i+1}$  rows of $\mathbf{G}$ by $\mathbf{G}_i$.  
Rows in $\mathbf{G}_i$ are generated independently according to the degree distribution $\Omega_{k-m_i}(d)$. Thus, we have: 
\begin{align}
 \text{Pr} \{\mathbf{G}_i\mathbf{x}^T= \mathbf{0}^T \} = \left(p_i(\mathbf{x})\right)^{t_{i+1}-t_{i}},
\end{align}   
in which $p_i(\mathbf{x})$ is calculated as in \eqref{probability}, and based on the number of acknowledged symbols and the degree distribution within the $i^{th}$ interval, i.e.,: 
$$ p_i(\mathbf{x}) = \sum_{d=1}^{k-m_i} \Omega_{k-m_i}(d) p_i \bigl( \mathbf{x}, ||\mathbf{r}||_0 = d \bigr).$$
Given that there are $L$ transmit intervals (i.e., $L$ feedback messages), we can calculate the probability of $\mathbf{G x}^T= \mathbf{0}^T$ for a given vector $\mathbf{x}$ as follows: 
\begin{equation}
 \text{Pr} \{\mathbf{G x}^T= \mathbf{0}^T \} = \prod_{i=0}^{L-1} \text{Pr} \{\mathbf{G}_i \mathbf{x}^T= \mathbf{0}^T \}. 
 \label{vector}
\end{equation}
Assembling these steps together, the ML decoder failure probability of the Delete-and-Conquer scheme is given by the following theorem.   
\bb{\begin{theorem}
Given that $L$ feedbacks are transmitted in total (i.e, one feedback after receiving the $t_i$-th ($i=1,...,L$) encoding symbol), the ML decoder failure probability of recovering an input symbol $j$ (for an arbitrary $j \in \{1,2,..k\}$) is upper bounded by
\begin{align}
p_e  \leq  \min \bigg\{1, \sum_{w=1}^{k} \big( \sum_{\substack{\mathbf{x} \\ ||\mathbf{x}||_0 =w \\ x_j=1}}  \prod_{i=0}^{L-1} \text{Pr} \{\mathbf{G}_i \mathbf{x}^T= \mathbf{0}^T \}\big) \bigg\}.
\label{BACK-bound}
\end{align}
\end{theorem}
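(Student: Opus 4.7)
The plan is to assemble the theorem from the building blocks already developed in this section: the union bound (8), Lemma \ref{lemma4} and equation (13) for the per-interval single-row probability, and the cross-interval factorization (15). No new probabilistic tool is needed; the work is in justifying that the factorization of $\Pr\{\mathbf{G}\mathbf{x}^T = \mathbf{0}^T\}$ over the feedback-induced intervals is legitimate once we condition on the realized feedback pattern.

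First I would start from (8) and re-index the sum over nonzero $\mathbf{x} \in GF(2^k)$ with $x_j = 1$ by Hamming weight $w = \|\mathbf{x}\|_0$, which runs from $1$ to $k$. This immediately yields the outer double sum that appears on the right-hand side of (\ref{BACK-bound}). Next, for each fixed $\mathbf{x}$, I would plug in (15), so that $\Pr\{\mathbf{G}\mathbf{x}^T = \mathbf{0}^T\}$ becomes the product over $i = 0, \ldots, L-1$ of $\Pr\{\mathbf{G}_i \mathbf{x}^T = \mathbf{0}^T\}$. Within the $i$-th interval, the $t_{i+1} - t_i$ rows are i.i.d. draws: the encoder picks a degree from $\Omega_{k-m_i}$ and then selects neighbors uniformly at random from the $k-m_i$ currently-active symbols. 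Lemma \ref{lemma4} together with (13) then gives the closed form for $p_i(\mathbf{x})$, and the per-interval probability becomes $(p_i(\mathbf{x}))^{t_{i+1}-t_i}$.

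Finally, since $p_e$ is a probability it is automatically bounded by $1$, so capping the union bound by $1$ produces the stated $\min\{1,\cdot\}$. This last step is important because the union bound in (8) can exceed $1$ for small $n$ or for atypical realizations of the feedback pattern.

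The main obstacle is handling the fact that $L$, the breakpoints $t_1, \ldots, t_L$, and the counts $m_i$ of acknowledged symbols in each interval are themselves random, driven by the history of received encodings and distance-$0$ or $1$ feedback events. The cleanest resolution is to condition on the realized feedback pattern $(L, t_1, m_1, \ldots, t_L, m_L)$; conditional on this pattern, the rows of $\mathbf{G}$ split into independent blocks $\mathbf{G}_0, \ldots, \mathbf{G}_{L-1}$, each consisting of i.i.d. rows generated from a fixed degree distribution $\Omega_{k-m_i}$ over a fixed support of size $k-m_i$. Under this conditioning, equations (13) and (15) follow immediately, and the outer union bound (8) still applies to each realization; taking the $\min$ with $1$ then yields (\ref{BACK-bound}) for every feedback pattern, which is what the theorem claims.
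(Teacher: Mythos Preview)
Your proposal is correct and follows essentially the same approach as the paper: invoke the union bound (8), stratify by Hamming weight, and for each $\mathbf{x}$ apply the interval factorization (15) together with Lemma~\ref{lemma4} and (13), then cap by $1$. The paper's proof is terser but identical in substance; in particular, your careful remark about conditioning on the realized feedback pattern matches the paper's explicit assumption that ``the values of $L$ and $t_i$'s \ldots\ are known.''
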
}

\begin{proof}
From Lemma \ref{lemma4} and its following results, we obtain that for a given input vector $\mathbf{x}$ with Hamming weight $w$ and $L$ feedback messages, the probability of  $\mathbf{G x}^T= \mathbf{0}^T $ is calculated as in Eq. \eqref{vector}. Therefore, summing over all possible input vectors $\mathbf{x}$ with the $j^{th}$ index equal to $1$, yields the theorem statement. It should be noted that we assume the values of $L$  and $t_i$'s (i.e., the total number of feedbacks and their trigger points) are known.  
\end{proof}

From \cite{rahnavard2007rateless}, the upper bound on the ML decoder failure probability when there is no feedback is calculated as:
\begin{align}
p_e  \leq \min \Bigg\{ 1,  &\sum_{w=1}^{k} \binom{k-1}{w-1}  \left(\sum_d \Omega_k(d) \frac{\sum_{\substack{s=0,2,..,2\lfloor \frac{d}{2} \rfloor} } \binom{w}{s} \binom{k-w}{d-s} }{\binom{k}{d}} \right)^n \Bigg\}.
\label{LT-bound}
\end{align} 
Fig. \ref{bound-comparison} numerically compares the upper bound in~\eqref{BACK-bound} with that in \eqref{LT-bound} for $k=100$ input symbols. The results confirm that collecting more encoding symbols reduces the bound on ML failure probability, as expected. However, Delete-and-Conquer codes with $0$ and $1$ feedback messages achieve a tighter upper-bound on the decoder failure probability.     

\begin{figure}[t!]
\centering
\includegraphics[scale=.4, trim=.5cm 0cm 1cm 1cm, clip]{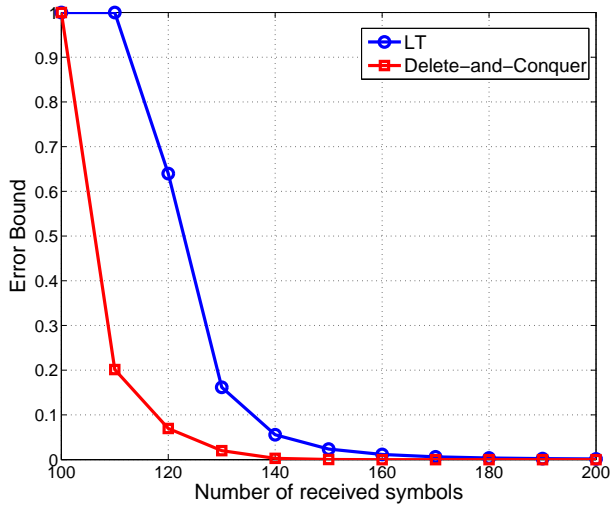}
\caption{Upper bound on the maximum likelihood decoder failure probability for the LT and the Delete-and-Conquer codes}
 \label{bound-comparison}
\end{figure}

\paragraph*{Asymptotic results}
We conclude our analysis of the Delete-and-Conquer scheme by providing upper bounds on its performance metrics. To this end, we keep the assumption that initially there exist $k$ input symbols at the encoder, and at some point, $m$ symbols are acknowledged. 
The Delete-and-Conquer distribution is thus given by $\Omega_{k-m}(i)$  (for $i=1, ..., k-m$).
Adapting the results of~\cite{luby2002lt} yields that the average degree of an encoding symbol generated by the encoder is given by $\bar{D}=O(\ln (k-m))$. Furthermore, an encoder that deletes $m$ symbols out of $k$ symbols, needs to transmit \emph{at most}
$
k-m + O\left(\sqrt{k-m} \ \ln^2(\frac{k-m}{\delta})\right)
$
encodings so that the decoder be able to recover all input symbols with probability at least $1-\delta$. Furthermore, computational complexity of the coding process is given by $O \left((k-m) \ln \frac{k-m}{\delta}\right)$. 
 One can notice that as the number of acknowledged symbols (i.e., parameter $m$) increases, performance metrics improves. In the case of no feedbacks (i.e., $m=0$) Delete-and-conquer codes reduce to the original LT codes.

\section{Simulation Results}
\label{simulation-results}
We evaluate the performance of rateless codes with nonuniform selection distributions against the Growth codes, Online codes proposed in \cite{cassuto2011line}, and recently proposed LT-AF codes \cite{talari2014robust}.

\subsection{General form}
\subsubsection{Intermediate performance}
In many applications such as video streaming with real-time playback requirements, it is essential to partially recover some symbols before the recovery of entire frame. In this context, although LT codes are capacity-achieving, they lack real-time features; in other words, not many input symbols are decoded until the decoding process is almost complete. By incorporating a nonuniform selection distribution at the encoder, we aim to enhance the intermediate symbol recovery rate. Fig. \ref{intermediate2} compares the performance of our codes with the LT-AF codes of Variable Node with Maximum Degree (LT-AF+VMD) \cite{talari2014robust}, where the authors show that LT-AF codes can surpass previous rateless codes with feedback including SLT codes. One key point, however, is that the LT-AF decoder is not able to recover any symbol until at least $k$ encoding symbols are received. As the results show, our scheme based on the Quantized distance method can achieve a high intermediate recovery rate. Moreover, the coding performance can be adjusted by tuning the parameter $s$ (feedback transmission interval).

\subsubsection{Coding overhead}
Next we compare the total number of forward and feedback transmissions needed by our codes in comparison with the LT-AF+VMD codes. As the results in Table \ref{tab:LT-AF} show, our codes have a slightly better  performance in terms of number of forward transmissions. However, LT-AF codes require less feedback transmissions. It should be noted that  amount of feedback in our codes can be adjusted using the parameter $s$, and that our codes are aimed to achieve a high intermediate symbol recovery rate, as the results in Fig. \ref{intermediate2} show.  
\begin{figure}[t]
\centering
\includegraphics[scale=0.4, trim=.5cm 2cm 1cm 1cm, clip]{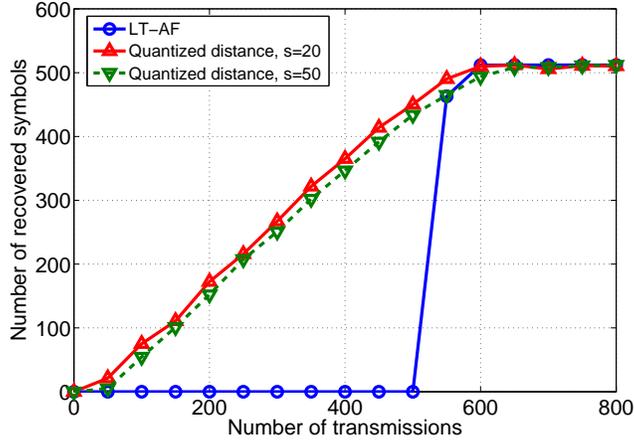}
  \caption[Optional caption for list of figures]{Intermediate performance of codes with nonuniform symbol selection against the LT-AF codes ($k=512$).}
 \label{intermediate2}
\end{figure}
\begin{table}[t]\centering
\renewcommand{\arraystretch}{1.3}
\ra{1.3}
\begin{tabular}{@{}lcccc@{}}\toprule
\multirow{2}{*}{Algorithm} 
& \multicolumn{2}{c}{k=512}  & \multicolumn{2}{c}{k=1024}\\
 \cmidrule{2-3} \cmidrule{4-5}
  & Forward & Feedback & Forward & Feedback  \\ \toprule
 LT-AF + VMD & 556.0 & 9.0 & 1084.0 & 11.8 \\ \midrule
 All-Distance  & 550.4 & 54.4 & 1084.8 & 107.8 \\ \midrule
 Quantized distance & 555.2 & 55.0 & 1112.6 & 111.0 \\
 \bottomrule
\end{tabular}
\caption{Number of transmissions needed by the LT-AF codes and our codes with $s=10$}
  \label{tab:LT-AF}
\end{table}
Table \ref{tab:parameter-s} shows the performance of our codes with the block length $k=512$ symbols and as the feedback interval $s$ increases. Similar to the previous results, the encoder is able to control the number of forward and feedback transmissions by changing the parameter $s$.  

\subsection{Primitive form}
\subsubsection{Intermediate performance}
 To investigate the progressive performance of the Delete-and-Conquer codes, we run simulations with the block length of $k=512$.   Results shown in Fig. \ref{intermediate1}, demonstrate that Growth codes can provide higher symbol recovery rate at the beginning, while Delete-and-Conquer achieves better performance when a small fraction of symbols are unrecovered (near the ``knee''). On the other hand, Delete-and-Conquer scheme achieves better performance compared with the Online codes, noting that Delete-and-Conquer codes improve the intermediate performance with a lightweight utilization of the back channel (i.e., one bit feedback for each of a small fraction of received symbols).  
 \begin{table}[]\centering
\renewcommand{\arraystretch}{1.3}
\ra{1.3}
\begin{tabular}{@{}lcccc@{}}\toprule
\multirow{2}{*}{Feedback interval} 
& \multicolumn{2}{c}{All-Distance}  & \multicolumn{2}{c}{Quantized distance}\\
 \cmidrule{2-3} \cmidrule{4-5}
  & Forward & Feedback & Forward & Feedback  \\ \toprule
 $s=5$ & 536.6 & 106.9 & 544.8 & 108.4 \\ \midrule
 $s=10$  & 550.4 & 54.4 & 555.2 & 55.0 \\ \midrule
 $s=50$ & 657.6 & 12.8 & 684.8 & 13.6\\ \midrule
 $s=100$ & 758.6 & 7.0 & 759.4 & 7.2 \\ \midrule
 $s=500$ & 1155.7 & 2.0 & 1172.6 & 2.0\\  \bottomrule
\end{tabular}
\caption{Number of transmissions needed by our codes as the interval of feedback transmission increases ($k=512$)} \label{tab:parameter-s}
\end{table}

\begin{figure}[t]
\centering
\includegraphics[scale=.4, trim=.5cm .5cm 1cm 1cm, clip]{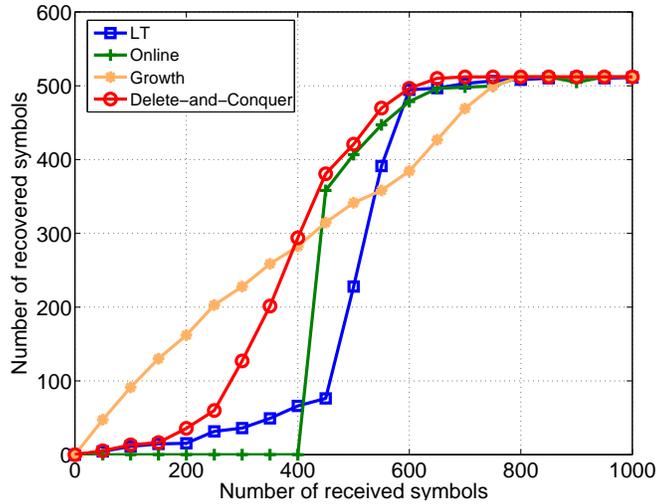}
  \caption[Optional caption for list of figures]{ Intermediate performance of the Delete-and-Conquer codes compared with other rateless codes ($k=512$)}
 \label{intermediate1}
\end{figure}

\subsubsection{Computational complexity}
Computational costs at the encoder and decoder are mainly related to the average degree of input symbols.
Fig.~\ref{fig:avg-degree} shows the average degree of input symbols for different codes compared to the Delete-and-Conquer codes. As the results show, Delete-and-Conquer codes have a smaller average degree on input symbols, and hence they incur less computational complexity. Smaller average degree is due to incrementally dropping input symbols from the coding window. 

\begin{figure}[h]
\centering
\addtolength{\subfigcapskip}{-0.05in}
\includegraphics[scale=.38, trim=1cm 0cm 1cm .75cm, clip]{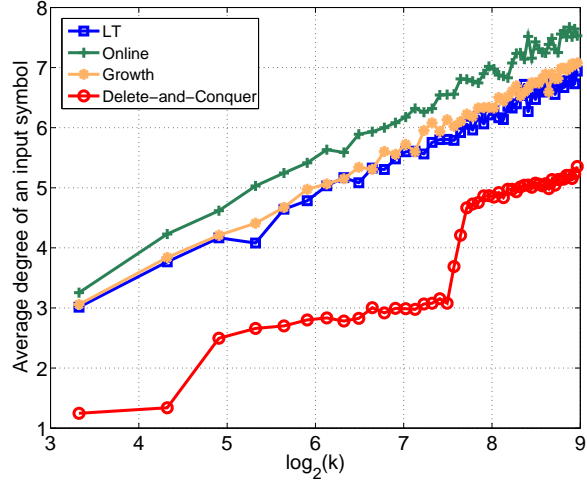}
\caption{Average degree of input symbols for various coding schemes }
\label{fig:avg-degree}
\end{figure}


\section{Conclusion}
\label{conclusion}
In this paper, we have developed feedback-based rateless codes with a nonuniform selection distribution.  Our encoders estimate the decoder state using feedback information, and dynamically adjust the selection distribution so that more helpful symbols (in terms of decoding progress) are assigned with a higher probability to be included in future encodings. As a result, we improve the intermediate performance of the underlying rateless codes and make them more suitable for applications with real-time decoding requirements. Our codes further support two important features: our decoder has full control of the \emph{rate} and \emph{timing} of feedback transmission. Our simulation results, backed by analysis, confirm that distance-type feedback paired with a nonuniform selection distribution achieves a high intermediate recovery rate.  On the whole, rateless codes with nonuniform selection distributions help the encoder to optimize for the performance requirements dictated by the application.    

\bibliographystyle{ieeetr} 
\bibliography{Summaries}

\end{document}